\documentclass[journal]{IEEEtran}

\usepackage[utf8]{inputenc}
\usepackage{amsmath,amssymb,amsthm}
\usepackage{bm}
\usepackage{mathtools}
\usepackage{graphicx}
\usepackage{bbm}
\usepackage{algorithm}
\usepackage{algorithmic}
\usepackage{booktabs}
\usepackage{enumitem}
\usepackage{color}
\usepackage{float}
\usepackage[colorlinks]{hyperref}

\newtheorem{theorem}{Theorem}

\title{AI-Limited Fluid Antenna-Aided Integrated Sensing and Communication Systems}

\author{Farshad~Rostami~Ghadi,~\IEEEmembership{Member},~\textit{IEEE}, 
            Kai-Kit~Wong,~\IEEEmembership{Fellow},~\textit{IEEE}, 
	    F.~Javier~L\'opez-Mart\'inez,~\IEEEmembership{Senior~Member},~\textit{IEEE}, 
	    Zhentian~Zhang,~\IEEEmembership{Graduate Student Member},~\textit{IEEE}, 
	    Hyundong~Shin,~\IEEEmembership{Fellow},~\textit{IEEE}, 
	    and~Christos~Masouros, \IEEEmembership{Fellow}, \textit{IEEE}
\vspace{-5mm}
}

\makeatletter
\def\blfootnote{\xdef\@thefnmark{}\@footnotetext}
\makeatother
\begin{document}
\maketitle
\blfootnote{The work of F. Rostami Ghadi is supported by the European Union's Horizon 2022 Research and Innovation Programme under Marie Sk\l odowska-Curie Grant No. 101107993.}
\blfootnote{The work of  K. K. Wong is supported by the Engineering and Physical Sciences Research Council (EPSRC) under Grant EP/W026813/1.}
\blfootnote{The work of F. J. L\'opez-Mart\'inez is supported by grant PID2023-149975OB-I00 (COSTUME) funded by MICIU/AEI/10.13039/501100011033, and by ERDF/EU.}
\blfootnote{\noindent The work of H. Shin is supported by the National Research Foundation of Korea (NRF) grant funded by the Korean government (MSIT) (RS-2025-00556064 and RS-2025-25442355), and by the Ministry of Science and ICT (MSIT), Korea, under the ITRC (Information Technology Research Center) support program (IITP-2025-RS-2021-II212046), supervised by the IITP (Institute for Information \& Communications Technology Planning \& Evaluation).}

\blfootnote{\noindent F. Rostami Ghadi and F. J. L\'opez-Mart\'inez are with the Department of Signal Theory, Networking and Communications, Research Centre for Information and Communication Technologies (CITIC-UGR), University of Granada, 18071, Granada, Spain (e-mail: $\rm \{f.rostami, fjlm\}@ugr.es$).}
\blfootnote{\noindent K. K. Wong is with the Department of Electronic and Electrical Engineering, University College London, WC1E 7JE, London, United Kingdom, and also with the Department of Electronic Engineering, Kyung Hee University, Yongin-si, Gyeonggi-do 17104, Republic of Korea (e-mail: $\rm kai\text{-}kit.wong@ucl.ac.uk$).}
\blfootnote{\noindent Z. Zhang is with the National Mobile Communications Research Laboratory, Southeast University, Nanjing, 210096, China (e-mail: $\rm zhentianzhangzzt@gmail.com$).}
\blfootnote{\noindent H. Shin is with the Department of Electronics and Information Convergence Engineering, Kyung Hee University, Yongin-si, Gyeonggi-do 17104, Republic of Korea (e-mail: $\rm hshin@khu.ac.kr$).}
\blfootnote{\noindent  C. Masouros is with the Department of Electronic and Electrical Engineering, University College London, WC1E 7JE, London, United Kingdom (e-mail: $\rm c.masouros@ucl.ac.uk$).}
	
\blfootnote{\noindent Corresponding Author: Farshad Rostami Ghadi.}

\begin{abstract}
This paper characterizes the fundamental limits of integrated sensing and communication (ISAC) when the transmitter is subject to an artificial intelligence (AI) representation bottleneck and the receiver employs a fluid antenna system (FAS). Specifically, the message is first encoded into an ideal Gaussian waveform and mapped by an AI encoder into a finite-capacity latent representation that constitutes the physical channel input, while the FAS receiver selects the port experiencing the most favorable channel conditions. We reveal that the AI bottleneck is equivalent to an additive representation noise, which reduces both the communication and sensing signal-to-noise ratios (SNRs) at the selected port. We then derive the resulting ISAC capacity-distortion region and establish tight converse and achievability bounds under general fading models, including Jakes-correlated channels. Leveraging the spatial degrees of freedom (DoF) characterization of the Jakes' model, we furthermore prove that the port-selection gain is fundamentally constrained by the physical length of the FAS region: the effective diversity order equals the numerical rank of the Jakes' correlation matrix and increases only with the FAS length. Consequently, enlarging the FAS length allows the selected-port SNR to approach the AI-imposed ceiling, driving the achievable communication rate and sensing mean-square error (MSE) toward their AI-limited fundamental bounds. Numerical results corroborate the analysis and scaling laws.
\end{abstract}

\begin{IEEEkeywords}
Integrated sensing and communication (ISAC), fluid antenna system (FAS), information bottleneck, AI.
\end{IEEEkeywords}

\section{Introduction}
\subsection{Context}
\IEEEPARstart{I}{ntegrated} sensing and communication (ISAC) being a core capability for sixth-generation (6G) mobile networks, facilitates simultaneous data transmission and environmental perception within a unified physical layer \cite{zhang2021over,liu2023seventy}. Most existing analyses of ISAC, nevertheless, assume ideal transceiver processing in which signals can be represented and manipulated with unlimited precision. In practice, modern transceivers increasingly rely on artificial intelligence (AI)-based modules such as neural encoders and feature extractors. 

These modules possess finite representational capacity and cannot preserve all information contained in the ideal baseband signal. This limitation can be naturally modeled as a learning or representation bottleneck in the transmit chain, whereby each symbol must be compressed into a latent variable with restricted information content \cite{ghadi2025info}. Such an AI-induced bottleneck fundamentally alters the effective signal quality seen by both communication and sensing receivers and introduces new performance ceilings that are absent in classical ISAC models.

A defining characteristic of ISAC systems is the intrinsic trade-off between communication and sensing performance, as both functions share common physical resources such as power, bandwidth, and waveform degrees of freedom (DoF). Enhancing communication reliability normally would reduce the fidelity of sensing observations, while prioritizing sensing accuracy can limit the achievable rate. These conflicting goals have motivated extensive information-theoretic studies of ISAC capacity-distortion trade-offs, which characterize the fundamental limits of jointly transmitting information while estimating environmental parameters. However, most existing results rely on ideal waveform generation and signal representation assumptions, and hence do not capture the impact of practical AI-driven transceiver constraints. When waveform generation is subjected to a finite-capacity AI representation bottleneck, the classical ISAC trade-off is fundamentally reshaped; even under favorable channel conditions, the achievable communication rate and sensing accuracy are bounded by the information that can be preserved through the AI module.

On the other hand, fluid antenna systems (FAS) have recently emerged as a practical and highly versatile alternative to conventional fixed-antenna architectures \cite{wong2021fas}. By dynamically exposing one of many closely spaced ports on a single radio frequency (RF) chain, a port selection-based FAS is able to finely sample the spatial channel at multiple locations within a compact physical space. More precisely, FAS is a hardware-agnostic system concept that treats the antenna as an adaptable physical-layer resource to broaden system design and network optimization \cite{new2025tot,hong2025contemporary,new2025flar,wu2024flu}. In \cite{Lu-2025}, Lu {\em et al.}~provided an explanation to FAS through the lens of electromagnetic. FAS in practice can take many different forms, including movable elements \cite{zhu2024historical}, liquid-based antennas \cite{shen2024design,Shamim-2025}, metamaterials \cite{Zhang-jsac2026,Liu-2025arxiv,liu2025meta}, reconfigurable pixels \cite{zhang2024pixel,tong-2025pixel,Wong-wc2026} and etc. In \cite{tong2025designs}, Tong {\em et al.}~discussed the pros and cons of different implementation technologies for realizing the FAS concept.

The capability of FAS provides several advantages that are attractive for ISAC. First of all, port selection exploits location-dependent channel variations, yielding diversity gains akin to multi-antenna systems but without the associated complexity. Second, selecting the strongest among many spatially distinct channel realizations creates an inherent gain that can significantly enhance both communication reliability and sensing sensitivity. Third, the fine spatial granularity of FAS enables improved electromagnetic sampling of the environment, which is especially beneficial for sensing tasks in which small spatial displacements induce meaningful variations in reflected signals \cite{wang2024fas,ghadi2025isac,zou2024isac,tang2025isac}. These properties make FAS a promising architecture for mitigating the  degradation induced by AI representation bottlenecks and recovering ISAC performance that would otherwise be lost due to finite learning capacity.

\subsection{State-of-the-Art}
In recent years, the information-theoretic and AI-enabled aspects of ISAC and FAS have been separately studied. For instance, the exact capacity-distortion region of single-receiver Gaussian ISAC was established in \cite{AhmadipourJoint2024}, and later extended to collaborative multiuser networks in \cite{AhmadipourCollaborative2023}. Finite-blocklength ISAC performance was analyzed in \cite{NikbakhtFinite2024}, while \cite{JoudehCaire2024} introduced log-loss sensing mutual information (MI) as a unified metric for joint communication-sensing system design. On the other hand, the rate-Cram\'{e}r-Rao bound (CRB) region for multiple-input multiple-output (MIMO) ISAC was investigated in \cite{hua2023mimo}, while \cite{an2023fund,ren2024fund} analyzed rate-detection and rate-CRB trade-offs in dual-functional radar/communication systems. Besides, surveys such as \cite{zhang2025int} and \cite{ald20256g} have documented the growing role of deep learning in waveform design, target estimation, and adaptive beamforming. Additionally, learning-based ISAC architectures including federated and fog-enabled implementations \cite{liu2024ai} and deep joint waveform-beamforming designs \cite{vaezi2025ai} demonstrate significant empirical gains. Also, the information bottleneck framework \cite{Tishby2000} and its deep variational extensions \cite{Alemi} provide mechanisms to regulate the MI carried by latent representations. Deep task-based quantization and bit-limited radar/communication have also been explored in \cite{Shlezinger2021,xi2021bit}.

Meanwhile, the information-theoretic limits of MIMO-FAS were recently characterized in \cite{New2024MIMOFAS}, demonstrating superior diversity-multiplexing behavior and improved $q$-outage capacity relative to the classical MIMO system. FAS-assisted dirty multiple access channels were analyzed in \cite{Ghadi2024DMAC}, where spatial correlation was captured accurately using copula theory and sizable outage gains were reported. In the context of multiuser information-theoretic analysis, fluid antenna multiple access (FAMA) has also been investigated under strong interference channels in \cite{Ghadi2025fama}, where combining FAS-enabled spatial reconfigurability with simultaneous non-unique decoding has been shown to markedly improve the system performance. 

Several learning-based approaches such as deep reinforcement learning for ISAC precoding with planar FAS \cite{Wang2024ISACDRL}, deep learning-enabled FAMA \cite{Waqar2023FAMA}, and fast port selection methods \cite{Chai2022PortSelection}, demonstrate the synergy between FAS reconfigurability and data-driven design. More recently, opportunistic multiuser FAMA via team-theoretic reinforcement learning was further developed in \cite{Waqar2024OFAMA}, and active sensing-based beam alignment for MIMO-FAS was also proposed in \cite{Jiang2025BeamAlign}. 

\subsection{Motivation and Contributions}
Despite the aforesaid advantages of both ISAC and FAS, the interplay between FAS-assisted ISAC systems and AI-limited models is not understood. Existing studies on AI-constrained ISAC primarily consider conventional fixed-antenna architectures, e.g., MIMO, and therefore do not capture the spatial sampling and selection dynamics introduced by FAS. Likewise, the current FAS literature focuses largely upon communication performance metrics such as outage probability and diversity gain, without incorporating learning constraints and sensing accuracy. Consequently, the fundamental limits of ISAC systems that operate under both AI representation constraints and FAS-based port selection remain unknown.

Motivated by this, this work provides the first information-theoretic characterization of FAS-assisted ISAC systems that operate under a pre-channel AI representation constraint. 
 The main contributions are summarized as follows.

\begin{itemize}
\item \underline{\textit{AI-constrained FAS-assisted ISAC}}.
We develop a unified model where each ideal transmit symbol is compressed into a latent representation of limited information content, while the FAS receiver selects one optimal port for joint communication and sensing. This structure induces the Markov relationship between the ideal symbol, its AI-generated representation, and the received communication and sensing observations, and it embeds the port selection process through the selected channel gains.
\item \underline{\textit{Capacity-distortion characterization}}.
For Gaussian representations, we show that the AI bottleneck with a FAS receiver manifests as an additional effective noise term that degrades both communication and sensing performance at the selected port. Using this equivalence, we derive explicit expressions for the achievable communication rate and sensing mean square distortion as functions of the FAS-selected channel gains. Furthermore, we establish matching converse and achievability bounds, thereby providing a complete description of the ISAC capacity-distortion region under the Rayleigh fading model.
\item \underline{\textit{FAS spatial DoF}}. 
Under rich scattering with the Jakes' correlation model, we characterize the effective spatial DoF of the FAS in the proposed AI-limited ISAC system and show that the port selection gain is fundamentally limited by the physical length of the FAS region. The resulting diversity order equals the numerical rank of the Jakes correlation matrix and increases only with the FAS length, irrespective of the number of available ports.

\item \underline{\textit{AI-limited performance ceiling}}. 
Additionally, we reveal that enlarging the FAS length enables the selected port signal-to-noise ratio (SNR) to approach the AI-imposed ceiling, allowing both the achievable communication rate and sensing distortion to converge to their AI-limited fundamental bounds. Numerical results validate the theoretical analysis and illustrate the trade-offs between AI capacity, spatial diversity, and ISAC performance.
\end{itemize}

\subsection{Organization and Mathematical Notations}
The remainder of this paper is organized as follows. Section \ref{sec:sys} introduces the system model, including the pre-channel AI representation bottleneck, the FAS receive architecture, and the adopted performance metrics. Section \ref{sec:cap} derives the capacity-distortion region of the proposed AI-constrained FAS-enabled ISAC system and establishes matching converse and achievability bounds. Section \ref{sec:stat} characterizes the statistics of the FAS-selected port gains under spatially correlated fading and analyzes the finite port performance. Section \ref{sec:asy} investigates the asymptotic behavior with respect to the physical length of the fluid antenna, revealing the role of spatial DoF in compensating the AI bottleneck. Section \ref{sec:var} presents a practical variational information bottleneck (VIB) encoder design that realizes the theoretical Gaussian representation model. Section \ref{sec:num} provides numerical results that validate the analysis and illustrate key trade-offs. Finally, Section \ref{sec:con} concludes the paper and discusses future research directions.

Throughout the paper, scalar variables are denoted by lowercase letters $x$, vectors by boldface lowercase letters $\mathbf{x}$, and matrices by boldface uppercase letters $\mathbf{X}$. The superscripts $(\cdot)^{\mathsf{T}}$ and $(\cdot)^{\dagger}$ denote transpose and Hermitian transpose, respectively. The operators $\mathbb{E}[\cdot]$ and $\mathbb{P}(\cdot)$ denote statistical expectation and probability. The notation $\mathcal{CN}(0,\sigma^2)$ represents a circularly symmetric complex Gaussian random variable with zero mean and variance $\sigma^2$. Also, MI is measured in bits and denoted by $I(\cdot;\cdot)$. The notation $\arg\max$ returns the maximizing index. The symbol $o(\cdot)$ follows standard asymptotic notation. Unless otherwise stated, $\log(\cdot)$ represents the base-$2$ logarithm. For convenience, the main symbols and variables used throughout the paper are summarized in Table~\ref{tab:notation}.

\begin{table}[t]
\caption{Summary of Notations}\label{tab:notation}
	\centering
	\renewcommand{\arraystretch}{1.2}
	\begin{tabular}{cl}
		\toprule
		\textbf{Symbol} & \textbf{Description} \\
		\midrule
		$M$ & Transmit information message \\
		$X$ & Ideal transmitted symbol  \\
		$Z$ & Latent representation \\
		$W_z$ & AI representation noise \\
		$C_{\mathrm{AI}}$ & AI representation capacity \\
		$N_z$ & Representation noise variance \\
		$N_z^\star$ & Representation noise minimum value \\
		$P$ & Transmit power \\
		$N_0$ & Thermal noise variance \\
		$Y_c$ & Communication received signal \\
		$Y_c,\,Y_s$ & Sensing received signal \\
		$\theta$ & Random sensing parameter of interest \\
		$\sigma_\theta^2$ & Variance of the sensing parameter \\
		$h_{c,\ell}$ & Communication channel gain at port $\ell$ \\
		$h_{s,\ell}$ & Sensing channel gain at port $\ell$ \\
		$\mathbf{h}_c$ & Communication channel vector \\
		$\mathbf{h}_s$ & Sensing channel vector \\
		$\ell^\star$ & Selected FAS port index \\
		$U(\cdot)$ & Utility function used for port selection \\
		$L$ & Number of FAS ports \\
		$W$ & Physical length of the fluid antenna\\
		$d$ & Inter-port spacing in the FAS \\
		$\lambda$ & Carrier wavelength \\
		$\mathbf{R}$ & Spatial correlation matrix of the FAS channels \\
		$\gamma_c^\star$ & Selected port communication gain \\
			$\gamma_s^\star$ & Selected port  sensing gain \\
		$\Gamma_c^\star$ & Effective communication SNR \\
				$\Gamma_s^\star$ & Effective sensing SNR\\
		$\Gamma_{\mathrm{AI}}$ & AI-imposed SNR ceiling \\
		$L'(W)$ & Effective number of spatial DoF \\
		$R$ & Achievable communication rate \\
		$D_s$ & Sensing distortion \\
		\bottomrule
	\end{tabular}
\end{table}

\section{System Model}\label{sec:sys}
As illustrated in Fig.~\ref{fig:system}, we consider an ISAC system where the transmitter seeks to convey a message $M$ to a FAS-equipped ISAC receiver that jointly performs communication decoding and sensing estimation, using $L$ closely spaced ports, while simultaneously enabling the estimation of a random sensing parameter $\theta$, e.g., target delay, reflection coefficient. The message $M$ is encoded into a length-$n$ codeword $\left\{X_t\right\}^n_{t=1}$ with independent and identically distributed (i.i.d.) Gaussian components $X_t \sim \mathcal{CN}(0,P)$, which is an information-theoretic abstraction to characterize fundamental performance limits. However, the symbol that actually enters the physical channel is not $X_t$ itself but a latent representation $Z_t$ produced by an AI encoder subject to a finite AI representation-capacity (learning) constraint, which limits the MI between the ideal transmitted symbol and its latent representation, rather than the physical communication channel capacity.

\begin{figure*}[t]
\centering
\includegraphics[width=1.9\columnwidth]{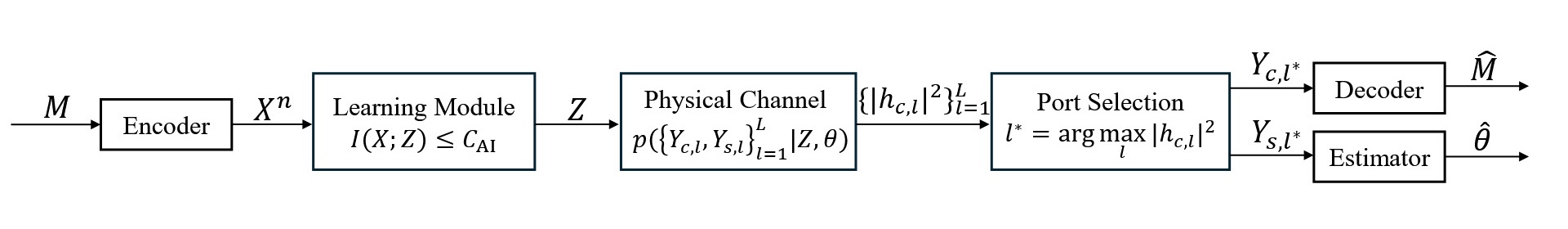}
\caption{Information-theoretic abstraction of the AI-constrained FAS-aided ISAC system model.}\label{fig:system}
\end{figure*}

\subsection{Pre-Channel AI Representation Bottleneck}
The AI encoder implements a memoryless stochastic mapping from the ideal symbol $X$ to a latent representation $Z$ under a prescribed information budget. Formally, the mapping is described by a conditional distribution $p(z|x)$ that must satisfy an MI constraint
\begin{equation}\label{eq:AI_constraint}
I(X;Z) \le C_{\mathrm{AI}},
\end{equation}
where $C_{\mathrm{AI}}$ quantifies the maximum number of relevant bits that the learning module can retain from the ideal transmit symbol \cite{tishby2000th,alemi2016deep}.

Following standard information bottleneck formulations, we adopt a Gaussian representation model 
\begin{equation}\label{eq:test_channel}
Z = X + W_z,~\mbox{where }W_z \sim \mathcal{CN}(0,N_z),
\end{equation}
with $W_z$ independent of $X$, where $W_z$ is a zero-mean circularly symmetric complex Gaussian representation noise term and $N_z$ is its variance. For this model, the MI satisfies
\begin{align}
I(X;Z) = \log_2\!\left(1 + \frac{P}{N_z}\right),
\end{align}
where $P$ is the transmit power. 

Enforcing \eqref{eq:AI_constraint} with equality yields the minimum feasible representation noise variance
\begin{align}\label{eq:Nz_star}
N_z^\star = \frac{P}{2^{C_{\mathrm{AI}}}-1},
\end{align}
which characterizes the effective information loss introduced by the AI bottleneck. Consequently, the pre-channel AI processing manifests as an additive distortion whose impact depends on the subsequent propagation channel.

\subsection{FAS-Equipped Receiver Model}
The FAS architecture affects the system exclusively through the spatial structure of the received channel realizations and the subsequent port selection rule, while the underlying physical ISAC channel law remains unchanged. The ISAC receiver, which jointly decodes the transmitted data message and estimates the sensing parameter, is equipped with a fluid antenna exposing a set of $L$ ports indexed by $\ell \in \{1,\dots,L\}$. These ports correspond to distinct physical locations, enabling the ISAC receiver to exploit small-scale spatial variations in the propagation environment for both communication and sensing tasks. At each channel use, the receiver selects a single port according to a deterministic selection rule
\begin{equation}\label{eq:selection_rule}
\ell^\star = \arg\max_{\ell} U\bigl(h_{c,\ell}, h_{s,\ell}\bigr),
\end{equation}
in which $h_{c,\ell}$ and $h_{s,\ell}$ denote the communication and sensing channel gains at port $\ell$, respectively, and $U(\cdot)$ denotes the utility function used for port selection. Here, for simplicity, we have taken $U = |h_{c,\ell}|^2$. Therefore, given the latent channel input $Z$, the per-port received signals are given by
\begin{align}
Y_{c,\ell} = h_{c,\ell} Z + N_c
\end{align}
and
\begin{align}
Y_{s,\ell} = h_{s,\ell}(\theta) Z + N_s,
\end{align}
where $h_{s,\ell}(\theta)$ captures the dependence of the sensing channel on the unknown parameter $\theta$, and  noise terms $N_c$ and $N_s$ are independent circularly symmetric complex Gaussian random variables with variance $N_0$.

Substituting the AI representation model \eqref{eq:test_channel} into the above expressions yields the effective per-port models
\begin{align}
Y_{c,\ell} = h_{c,\ell} X + \underbrace{h_{c,\ell} W_z + N_c}_{\widehat{N}_{c,\ell}} 
\end{align}
and
\begin{align}
Y_{s,\ell} = h_{s,\ell}(\theta) X + \underbrace{h_{s,\ell}(\theta) W_z + N_s}_{\widehat{N}_{s,\ell}}.
\end{align}
The equivalent noise terms $\widehat{N}_{c,\ell}$ and $\widehat{N}_{s,\ell}$ remain Gaussian due to linearity and independence such that their corresponding variances can be found as
\begin{align}
\sigma_{c,\ell}^2 = N_0 + |h_{c,\ell}|^2 N_z^\star 
\end{align}
and
\begin{align}
\sigma_{s,\ell}^2 = N_0 + |h_{s,\ell}(\theta)|^2 N_z^\star.
\end{align}
Thus, the effective communication and sensing SNRs at port $\ell$, respectively, become 
\begin{align}
\Gamma_{c,\ell}= \frac{|h_{c,\ell}|^2 P}{N_0 + |h_{c,\ell}|^2 N_z^\star}
\end{align}
and
\begin{align}
\Gamma_{s,\ell}= \frac{|h_{s,\ell}(\theta)|^2 P}{N_0 + |h_{s,\ell}(\theta)|^2 N_z^\star}.
\end{align}

Under the port selection rule \eqref{eq:selection_rule}, the system operates with random selected port gains $H_c^\star = h_{c,\ell^\star} $ and $H_s^\star = h_{s,\ell^\star}$, and the corresponding effective SNRs, $\Gamma_c^\star$ and $\Gamma_s^\star$.

\subsection{Performance Metrics}
ISAC performance is characterized in terms of both communication reliability and sensing accuracy. Let $\mathcal{M}$ be the message set and $\hat{M}$ be the message estimate produced by the communication receiver after observing $Y_{c,\ell^\star}^n$ over $n$ channel uses. Therefore, the communication rate is defined as
\begin{equation}
R = \frac{1}{n}\log_2|\mathcal{M}|,
\end{equation}
and a rate $R$ is said to be achievable if the decoding error probability satisfies $\mathbb{P}(\hat{M}\neq M)\to 0$ as $n\to\infty$.

For sensing, the receiver forms an estimate $\hat{\theta}$ of the underlying parameter $\theta$ using the sequence of sensing observations $Y_{s,\ell^\star}^n$. We measure sensing performance by the mean square error (MSE), defined as
\begin{equation}
D_s = \mathbb{E}\!\left[\,\left|\theta - \hat{\theta}(Y_{s,\ell^\star}^n)\right|^2\,\right], 
\end{equation}
in which a distortion level $D_s$ is achievable if there exists an estimator such that the above MSE is attained in the limit of large blocklength.

Note that the fundamental trade-off between communication and sensing under the AI representation constraint is captured by the capacity-distortion region, denoted as $\mathcal{R}_{\mathrm{AI\mbox{-}FAS}}(C_{\mathrm{AI}})$, which consists of all pairs $(R,D_s)$ that are simultaneously achievable when the input representation mapping is constrained to meet $I(X;Z)\le C_{\mathrm{AI}}$. This region characterizes the optimal performance limits of ISAC with an AI-constrained transmitter and an FAS-enabled receiver.

\section{Capacity-Distortion Region}\label{sec:cap}
This section establishes the fundamental communication-sensing trade-off for the considered ISAC architecture with an AI-constrained transmitter and an FAS-enabled receiver. The analysis captures, in closed-form, how the AI representation bottleneck and the selected port gains jointly determine the maximum achievable communication rate and the minimum attainable sensing distortion. The result applies to general FAS fading statistics and any deterministic port selection rule.

\begin{theorem}[Capacity-Distortion Region]\label{thm:region}
Consider an ISAC system with the AI representation constraint $I(X;Z)\le C_{\mathrm{AI}}$, employing the Gaussian representation model \eqref{eq:test_channel}, and having an FAS receiver using the selection rule in \eqref{eq:selection_rule}. The achievable communication rate $R$ and sensing distortion $D_s$ satisfy
\begin{align}\label{eq:R_region}
R &\le \mathbb{E}\left[\log_2\left(1 + \frac{|H_c^\star|^2 P}{\,N_0 + |H_c^\star|^2 N_z^\star}\right)\right]
\end{align}
and
\begin{align}\label{eq:D_region}
D_s &\ge \mathbb{E}\left[\frac{\sigma_\theta^2}{1 + \frac{|H_s^\star|^2 P}{\,N_0 + |H_s^\star|^2 N_z^\star}}\right],
\end{align}
where $N_z^\star$ denotes the minimum representation noise variance permitted by the AI bottleneck, and the expectation is taken over the distribution of the selected port gains $(H_c^\star, H_s^\star)$. Furthermore, these bounds are tight such that every pair $(R,D_s)$ satisfying \eqref{eq:R_region}-\eqref{eq:D_region} is achievable.
\end{theorem}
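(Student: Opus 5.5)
The plan is to treat the selected-port channel as a fading channel whose state $(H_c^\star,H_s^\star)$ is known at the receiver. The AI bottleneck enters only through the equivalent Gaussian noise $\widehat N_{c,\ell^\star}$ and $\widehat N_{s,\ell^\star}$ derived in Section~\ref{sec:sys}. Because the selection rule \eqref{eq:selection_rule} is a deterministic function of the channel gains, conditioning on the realization of the selected gains gives, per channel use, an ordinary scalar Gaussian channel with input $X\sim\mathcal{CN}(0,P)$ and noise variance $N_0+|H_c^\star|^2N_z$ (and the analogous sensing channel). The argument splits into a converse and an achievability part, each handled separately for $R$ and for $D_s$.

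\emph{Converse.} For the rate, I apply Fano's inequality to $Y_{c,\ell^\star}^n$ together with the channel-state sequence. This gives $nR\le \sum_t I(X_t;Y_{c,t}\mid H_{c,t}^\star)+n\epsilon_n$. Conditioned on the state, the effective channel is additive Gaussian with fixed noise variance, so the Gaussian input maximizes mutual information under power $P$. Each term is then at most $\mathbb{E}[\log_2(1+\Gamma_c^\star)]$.

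The remaining point is that the AI constraint forces $N_z\ge N_z^\star$. Since $I(X;Z)=\log_2(1+P/N_z)\le C_{\mathrm{AI}}$ is equivalent to this inequality, and $\Gamma_c$ is decreasing in $N_z$, the bound \eqref{eq:R_region} follows. Alternatively, the data-processing inequality along the Markov chain $X\to Z\to Y_c$ confirms that the rate can never exceed $C_{\mathrm{AI}}$, which is consistent with the expression.

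For sensing, I use the Gaussian model of $\theta$ with variance $\sigma_\theta^2$ and treat the sensing observation as a linear Gaussian observation with effective SNR $\Gamma_s^\star$. The MMSE of a Gaussian parameter observed at SNR $\Gamma$ is $\sigma_\theta^2/(1+\Gamma)$, and any estimator does at least as well as this lower bound allows and no better. Averaging over the selected sensing gain gives \eqref{eq:D_region}, again monotone in $N_z$ so that $N_z=N_z^\star$ is extremal.

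\emph{Achievability.} I use i.i.d.\ Gaussian codebooks passed through the test channel \eqref{eq:test_channel} with $N_z=N_z^\star$, so that the constraint \eqref{eq:AI_constraint} holds with equality. The receiver decodes by joint typicality with channel-state information, which achieves the ergodic rate in \eqref{eq:R_region} by standard arguments. The sensing estimator is the conditional MMSE (linear) estimator at the selected port, which attains the bound in \eqref{eq:D_region}. Since the same Gaussian input simultaneously maximizes the rate and minimizes the MSE, there is no conflict between the two functions. Hence every pair dominated by the two bounds is achievable, by time sharing or by simply operating at the corner point.

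\emph{Main obstacle.} I expect the sensing converse to be the hardest step. Making the linear-Gaussian MMSE expression rigorous requires assuming $\theta$ enters through an effectively Gaussian linear observation of the form $h_s(\theta)Z$. It also requires handling the dependence of $N_z^\star|h_s(\theta)|^2$ on $\theta$; I would address this by conditioning on $H_s^\star$ as known and treating the effective per-symbol observation as Gaussian.

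The fact that the selected port depends only on $H_c$ also needs care, since it correlates $H_s^\star$ with the selection. I would handle this by writing every expectation with respect to the joint law of $(H_c^\star,H_s^\star)$ induced by $\ell^\star$, and noting that the conditional bounds hold pointwise in that realization.
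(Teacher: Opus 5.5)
Your overall structure matches the paper's: Fano plus Gaussian optimality with $N_z^\star$ substituted for the rate, an MMSE evaluation for sensing, and Gaussian codebooks through the test channel at $N_z^\star$ for achievability. However, the sensing half has a genuine gap in both directions.

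The formula $\sigma_\theta^2/(1+\Gamma)$ is the MMSE of a single observation $Y=a\theta+N$ in which $a$ is known and independent of $\theta$, $N$ is Gaussian and independent of $\theta$, and $|a|^2\sigma_\theta^2/\mathrm{Var}(N)=\Gamma$. The selected-port observation $Y_{s,\ell^\star}=h_{s,\ell^\star}(\theta)(X+W_z)+N_s$ is not of this form. Here $\Gamma_s^\star$ is the SNR of the data symbol $X$, not of $\theta$, and the effective noise $h_{s,\ell^\star}(\theta)W_z$ itself depends on $\theta$.

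Your proposed remedy, conditioning on $H_s^\star$ as known, is self-defeating. Since $H_s^\star=h_{s,\ell^\star}(\theta)$ is a function of $\theta$, revealing it can reveal $\theta$ and drive the MMSE to zero. That would break the lower bound rather than prove it.

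There is also a blocklength issue. $D_s$ is computed from the whole block $Y_{s,\ell^\star}^n$, so if $\theta$ is constant over the block the MMSE decays like $1/n$; in the linear Gaussian case it is $\sigma_\theta^2/(1+n\Gamma)$. A per-letter bound needs i.i.d.\ $\theta_t$ and a per-letter distortion.

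Most seriously, your claim that the Gaussian input also minimizes the MSE fails even in the cleanest linear model. Take $Y_{s,t}=H^\star_{s,t}\theta_tX_t+\tilde N_t$, with $X_t$ known after decoding and $\tilde N_t$ Gaussian and independent of $\theta_t$. The conditional MMSE is $\sigma_\theta^2/(1+c|X_t|^2)$ with $c=|H^\star_{s,t}|^2\sigma_\theta^2/\mathrm{Var}(\tilde N_t)$, which is strictly convex in $|X_t|^2$. By Jensen, its average over $X_t\sim\mathcal{CN}(0,P)$ strictly exceeds $\sigma_\theta^2/(1+cP)$. Equality holds only for constant-modulus inputs, and those are not rate-optimal. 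This is the deterministic--random trade-off of Gaussian ISAC, so a simultaneously achievable corner point, and hence a box-shaped region, does not follow.

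To close the argument, you must state as an explicit assumption a sensing model in which the distortion depends on the input only through its power, for example a known probing signal of power $P$. The paper's sketch, which appeals to the I-MMSE identity, is silent on exactly these points, so you cannot borrow the missing step from it.

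On the rate side, bounding $I(X_t;Y_{c,t}\mid H^\star_{c,t})$ is the right move. It is cleaner than the paper's route through $I(Z^n;Y^n_{c,\ell^\star},\ell^\star)$, which by itself only yields the physical SNR $|H_c^\star|^2(P+N_z^\star)/N_0$. However, you mix two input models. When invoking Gaussian optimality you let the codebook be arbitrary, but you then use $I(X;Z)=\log_2(1+P/N_z)$, which holds only for Gaussian $X$. For non-Gaussian inputs the AI constraint no longer forces $N_z\ge N_z^\star$.

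For example, take $C_{\mathrm{AI}}=2$. QPSK satisfies $I(X;Z)<2$ for every $N_z>0$. For a deterministic selected gain with $|H_c^\star|^2P/N_0=100$ and small $N_z$, it delivers essentially $2$ bits. The right-hand side of \eqref{eq:R_region} is only $\log_2\bigl(1+100/(1+100/3)\bigr)\approx 1.97$.

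You therefore need to state one of two readings explicitly. Either $X$ is Gaussian, as in the system model, in which case no optimization over inputs is needed. Or the AI constraint is imposed directly as $N_z\ge N_z^\star$. Under either reading, your rate converse and rate achievability go through.
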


\begin{IEEEproof}[Proof]
The converse follows from the Markov chain $M \to X^n \to Z^n \to (Y_c^n,Y_s^n)$, where the selected port observations $(Y_{c,\ell^\star}^n, Y_{s,\ell^\star}^n)$ are deterministic functions of $(Y_c^n,Y_s^n)$ and the section rule. This implies via Fano's inequality that
\begin{equation}
nR \le I(Z^n; Y_{c,\ell^\star}^n,l^\star) + o(n).
\end{equation}

Applying the chain rule, conditioning on the selected port, and invoking the memorylessness of the channel yield a single-letter upper bound on $I(Z;Y_c)$ evaluated under the constraint $I(X;Z)\le C_{\mathrm{AI}}$. The Gaussian channel is optimal under this MI constraint, and substituting its equivalent representation-noise variance $N_z^\star$ produces \eqref{eq:R_region}.

For sensing, the  minimum MSE-MI identity links the minimum achievable distortion to the information conveyed by $Y_s$ about $\theta$. Evaluating this relationship for the Gaussian model and accounting for the selected port distribution yields the lower bound in \eqref{eq:D_region}.

Achievability is established using Gaussian codebooks for $X^n$, the optimal Gaussian mapping $X^n \mapsto Z^n$, and standard maximum-likelihood and  minimum MSE estimators at the selected port. The resulting communication rate and sensing distortion attain the expressions on the right-hand sides of \eqref{eq:R_region} and \eqref{eq:D_region}, completing the characterization.
\end{IEEEproof}

Theorem~\ref{thm:region} holds for arbitrary joint fading distributions of the port dependent channel gains $\{(h_{c,\ell},h_{s,\ell})\}$ and for any deterministic utility-based selection policy. In the following sections, we investigate the implications of this general region for commonly studied FAS models and characterize how the number of ports and the underlying fading statistics shape the achievable ISAC performance.

\section{FAS Statistics and Finite-$L$ Performance}\label{sec:stat}
The capacity-distortion region in Theorem~\ref{thm:region} depends on the statistics of the selected port gains $H_c^\star = h_{c,\ell^\star}$ and $H_s^\star = h_{s,\ell^\star}$ through $\gamma_c^\star = |H_c^\star|^2$ and $\gamma_s^\star = |H_s^\star|^2$. In FAS, these gains are strongly affected by the spatial correlation among ports, where such correlation is typically characterized by Jakes' model in the case of rich scattering. Accordingly, this section explains how existing analytical results for FAS-selected port gains can be combined with the proposed AI-limited ISAC framework.

For the considered one-dimensional ($1$D) FAS with $L$ ports, length $W$, the inter-port spacing is given by $d = W/(L-1)$. The communication channel vector is defined as
\begin{equation}
\mathbf{h}_c = (h_{c,1},\dots,h_{c,L})^{\mathsf{T}},
\end{equation}
a zero-mean circularly symmetric complex Gaussian vector $\mathbf{h}_c \sim \mathcal{CN}(\mathbf{0},\mathbf{R})$, with spatial covariance matrix
\begin{equation}\label{eq:Jakes_cov}
R_{k,\ell} = J_0\!\left(\frac{2\pi}{\lambda} |k-\ell| d\right),
\end{equation}
where $J_0(\cdot)$ is the zeroth-order Bessel function of the first kind and $\lambda$ is the wavelength.  The sensing channel vector $\mathbf{h}_s = (h_{s,1},\dots,h_{s,L})^{\mathsf{T}}$ is modeled with the same spatial scattering structure as $\mathbf{h}_c$, so that both channels exhibit identical Jakes's model spatial correlation across the FAS ports.

Using the SNR maximizing selection rule
\begin{equation}\label{eq:FAS_max_rule}
\ell^\star = \arg\max_{\ell\in\{1,\dots,L\}} |h_{c,\ell}|^2,
\end{equation}
the selected port communication gain is defined as $\gamma_c^\star = \max_\ell |h_{c,\ell}|^2$, and the sensing gain at the selected port is given by $\gamma_s^\star = |h_{s,\ell^\star}|^2$, where the corresponding distributions are derived in terms of the spatial correlation matrix $\mathbf{R}$ in \cite{new2024fluid}. It has been proven in \cite{new2024fluid} that there exists an integer $L'(W)\leq L$ depending only on the length $W$ (and thus on $\mathbf{R}$) such that increasing the number of ports beyond $L'(W)$ does not yield additional diversity gain. In other words, the effective number of diversity branches of the FAS is governed by the physical length, and not by the nominal port count.

On the other hand, the sensing gain $\gamma_s^\star$ depends on how $\mathbf{h}_s$ is correlated with $\mathbf{h}_c$. When $\mathbf{h}_s$ and $\mathbf{h}_c$ are independent and have the same marginal Rayleigh distribution, the index $\ell^\star$ in \eqref{eq:FAS_max_rule} is independent of $\mathbf{h}_s$, and thus $\gamma_s^\star$ has the same marginal distribution as $|h_{s,\ell}|^2$, i.e., exponential with unit mean. When communication and sensing share common scatterers, $\gamma_s^\star$ can be obtained from the joint covariance of $(\mathbf{h}_c,\mathbf{h}_s)$ using the same methodology as in \cite{new2024fluid}, but applied to an augmented block covariance matrix.
	
In our AI-constrained ISAC setting, the representation bottleneck introduces an additional noise term $N_z^\star$ in the effective SNRs at the selected FAS port. Specifically, for a given realization of the selected gain $\gamma_c^\star$, the effective communication SNR is denoted as
\begin{equation}\label{eq-snrc}
\Gamma_c^\star=\frac{\gamma_c^\star P}{N_0 + \gamma_c^\star N_z^\star}, 
\end{equation}
and the sensing SNR is defined as
\begin{equation}
\Gamma_s^\star=\frac{\gamma_s^\star P}{N_0 + \gamma_s^\star N_z^\star}.
\end{equation}
Therefore, the capacity-distortion region in Theorem~\ref{thm:region} can then be written as
\begin{align}
R&=\mathbb{E}\left[\log_2\!\left(1+\Gamma_c^\star\right)\right]\notag\\
&=\int_0^\infty\log_2\left(1+\frac{xP}{N_0 + x N_z^\star}\right)f_{\gamma_c^\star}(x)\,dx\label{eq:R_integral_FAS}
\end{align}
and
\begin{align}
D_s&=\mathbb{E}\left[\frac{\sigma_\theta^2}{1+\Gamma_s^\star}\right]\notag\\
&=\int_0^\infty\frac{\sigma_\theta^2}{1 + \frac{xP}{N_0 + x N_z^\star}}f_{\gamma_s^\star}(x)\,dx,\label{eq:D_integral_FAS}
\end{align}
where $f_{\gamma_c^\star}(\cdot)$ is the probability density function (PDF) of the FAS selected gain obtained in \cite{new2024fluid}, and $f_{\gamma_s^\star}(\cdot)$ is determined by the sensing side correlation model.

Equations \eqref{eq:R_integral_FAS} and \eqref{eq:D_integral_FAS} provide a direct bridge between the AI-limited ISAC framework and the existing FAS channel theory such that the AI bottleneck appears only through the additional noise variance $N_z^\star$, while the finite-$L$ behavior and the influence of the length $W$ enter through the FAS-selected gain distributions already characterized in \cite{new2024fluid}.

\section{Asymptotic Behavior: FAS length, Spatial DoF, and AI Compensation}\label{sec:asy}
In this section, we analyze how the physical length of a fluid antenna influences the maximum gain extracted through port selection and how this, in turn, interacts with the AI-induced representation noise $N_z^\star$.  In contrast to selection schemes over i.i.d.~fading, the selected port gain in an FAS is fundamentally limited by the spatial DoF of a $1$D region of length $W$.

\subsection{Spatial DoF and Selected Port Gain Saturation}
Under the Jakes correlation model, the communication channel vector $\mathbf{h}_c = [h_{c,1},\dots,h_{c,L}]^{\mathsf{T}}$ is complex Gaussian with Toeplitz covariance
\begin{equation}
R_{k,\ell}(W)= J_0\left(\frac{2\pi}{\lambda}\frac{W}{L-1}\,|k-\ell|\right).
\end{equation}
As shown in \cite{new2024fluid}, the high-SNR outage probability of the FAS-selected channel satisfies
\begin{equation}
\mathbb{P}\left(|h_{c,\mathrm{FAS}}| < \Omega\right)= \frac{1}{\det(\mathbf{R})}\,\Omega^{2L'(W)} + o(\Omega^{2L'(W)}),
\end{equation}
where $L'(W)$ is the numerical rank of the correlation matrix $\mathbf{R}$ and represents the effective number of spatial DoF supported by an FAS of length $W$. Consequently, the diversity order is given by $D_{\mathrm{FAS}}(W) \approx L'(W)$, which is independent of the number of ports $L$ once $L$ is large enough. Thus, increasing $L$ does not cause the selected port gain $\gamma_c^\star = \max_{\ell}|h_{c,\ell}|^2$ to diverge. Instead, as $L\to\infty$ with fixed $W$, $\gamma_c^\star$ converges in distribution to a length-limited random variable with finite mean and a finite upper tail determined by $L'(W)$. Hence, unlike the i.i.d.~Rayleigh case where $\gamma_c^\star(L)=\Theta(\log L)$, a $1$D FAS exhibits length-limited gain scaling.

\subsection{Consequences for AI-Limited ISAC}
Recall the effective SNR defined in \eqref{eq-snrc}, where the AI bottleneck introduces the ceiling
\begin{equation}\label{eq:theopt}
\Gamma_{\mathrm{AI}} = \frac{P}{N_z^\star} = 2^{C_{\mathrm{AI}}}-1.
\end{equation}
If $\gamma_c^\star$ were unbounded, then $\Gamma_c^\star \to \Gamma_{\mathrm{AI}}$, allowing the system to fully achieve the AI-limited capacity-distortion bounds. But as $\gamma_c^\star$ saturates for fixed $W$, the effective SNR remains strictly below this ceiling. Let $\gamma_{c,\max}(W)$ be the maximum achievable selected port gain in the dense port limit. Then
\begin{equation}
\Gamma_c^\star(W)\le \frac{\gamma_{c,\max}(W)P}{N_0 + \gamma_{c,\max}(W) N_z^\star},
\end{equation}
and the achievable communication rate satisfies
\begin{equation}
R(W)=\mathbb{E}\left[\log_2\left(1+\Gamma_c^\star(W)\right)\right]< C_{\mathrm{AI}}.
\end{equation}

Larger FAS length increases $L'(W)$, improves the upper tail of $\gamma_c^\star$, and raises the effective SNR toward the AI ceiling. In the limiting case $W\to\infty$, the FAS supports an unbounded number of spatial DoF, so that the selected port gain can grow arbitrarily large under the adopted far-field fading model, and the effective SNR approaches the AI-imposed ceiling. These effects are now summarized by the following theorem.

\begin{theorem}[Length-Dependent Compensation of the AI Bottleneck]\label{thm:AI_FAS_length_compensation}
Let $\gamma_c^\star(W)$ denote the FAS-selected gain under Jakes correlated fading model. Then
\begin{enumerate}
\item For fixed FAS length $W < \infty$,
\begin{align}
R(W) &<C_{\mathrm{AI}}
\end{align}
and
\begin{align}
D_s(W) &>\frac{\sigma_\theta^2}{2^{C_{\mathrm{AI}}}}.
\end{align}
\item If the FAS length increases such that $\gamma_c^\star(W)\to\infty$, then the AI-limited ISAC bounds are achieved:
\begin{align}
\lim_{W\to\infty} R(W) &= C_{\mathrm{AI}}
\end{align}
and
\begin{align}
\lim_{W\to\infty} D_s(W) &= \frac{\sigma_\theta^2}{2^{C_{\mathrm{AI}}}}.
\end{align}
\end{enumerate}
\end{theorem}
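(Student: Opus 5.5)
Our plan is to reduce everything to the scalar function $g(x) = \frac{xP}{N_0 + x N_z^\star}$ and its two compositions $\phi(x)=\log_2(1+g(x))$ and $\psi(x)=\sigma_\theta^2/(1+g(x))$. These enter Theorem~\ref{thm:region} through the integrals \eqref{eq:R_integral_FAS}--\eqref{eq:D_integral_FAS}. A direct computation gives
\[
g'(x)=\frac{PN_0}{(N_0+xN_z^\star)^2}>0,
\]
and $g(x)<P/N_z^\star=\Gamma_{\mathrm{AI}}$ for every finite $x\ge 0$, with $g(x)\uparrow\Gamma_{\mathrm{AI}}$ as $x\to\infty$. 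Using \eqref{eq:theopt}, we have $\log_2(1+\Gamma_{\mathrm{AI}})=C_{\mathrm{AI}}$ and $\sigma_\theta^2/(1+\Gamma_{\mathrm{AI}})=\sigma_\theta^2/2^{C_{\mathrm{AI}}}$. Hence $\phi$ is strictly increasing and bounded above by $C_{\mathrm{AI}}$, while $\psi$ is strictly decreasing and bounded below by $\sigma_\theta^2 2^{-C_{\mathrm{AI}}}$. Both bounds are approached only as the argument diverges.

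For part 1, we fix $W<\infty$. Since $\gamma_c^\star$ and $\gamma_s^\star$ are almost surely finite (they are maxima or evaluations of finitely many Gaussian magnitudes), we have $\phi(\gamma_c^\star)<C_{\mathrm{AI}}$ almost surely. Taking expectations then gives $R(W)=\mathbb{E}[\phi(\gamma_c^\star)]<C_{\mathrm{AI}}$. The strict inequality survives because a random variable that is almost surely strictly below a constant has expectation strictly below it: the gap $C_{\mathrm{AI}}-\phi(\gamma_c^\star)$ is positive almost surely and hence has positive mean. The same argument with $\psi(\gamma_s^\star)>\sigma_\theta^2 2^{-C_{\mathrm{AI}}}$ almost surely gives $D_s(W)>\sigma_\theta^2/2^{C_{\mathrm{AI}}}$.

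We note that the bounded upper tail $\gamma_{c,\max}(W)$ from the DoF discussion is not even needed here; almost-sure finiteness suffices. The length-limited behaviour matters only in quantifying the size of the gap.

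For part 2, we interpret the hypothesis as $\gamma_c^\star(W)\to\infty$ in probability as $W\to\infty$. This is supported by $L'(W)\to\infty$ and the outage expansion $\mathbb{P}(\gamma_c^\star<\Omega^2)\approx\Omega^{2L'(W)}/\det\mathbf{R}$. Since $\phi$ is bounded and continuous with $\phi(x)\to C_{\mathrm{AI}}$, we have $\phi(\gamma_c^\star(W))\to C_{\mathrm{AI}}$ in probability. Bounded convergence (the dominating constant is $C_{\mathrm{AI}}$) then yields $\lim_{W\to\infty}R(W)=C_{\mathrm{AI}}$. Concretely, for any $\epsilon>0$ we choose $x_\epsilon$ with $\phi(x_\epsilon)>C_{\mathrm{AI}}-\epsilon$, which gives
\[
C_{\mathrm{AI}}-R(W)\le \epsilon + C_{\mathrm{AI}}\,\mathbb{P}(\gamma_c^\star(W)<x_\epsilon),
\]
and the last term vanishes as $W\to\infty$.

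The main obstacle is the sensing limit. $D_s$ depends on $\gamma_s^\star=|h_{s,\ell^\star}|^2$, not on $\gamma_c^\star$. In the independent-channel case discussed in Section~\ref{sec:stat}, $\gamma_s^\star$ stays exponential with unit mean, so $D_s(W)$ would not converge to the ceiling. We would therefore make explicit the assumption under which part 2 is meant: the sensing and communication channels share scatterers, i.e., $\mathbf{h}_s=\mathbf{h}_c$ up to a $\theta$-dependent scaling, or more generally the selection utility also drives $\gamma_s^\star\to\infty$ in probability. Under this assumption we repeat the bounded-convergence argument with $\psi$, which is bounded by $\sigma_\theta^2$. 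We would first try to justify $\gamma_s^\star\to\infty$ from the augmented block-covariance model with correlation coefficient tending to one. Failing that, we would state it as the natural counterpart of the hypothesis on $\gamma_c^\star$.
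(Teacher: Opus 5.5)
Your proposal is correct and follows the same route as the paper's proof. Part~1 rests on the strict sub-ceiling property $g(x)<\Gamma_{\mathrm{AI}}$, and part~2 on divergence in probability of the selected gain followed by a passage to the limit. It is, however, more careful exactly where the paper's sketch is weak.

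For part~1, the paper appeals to a length-limited ``finite distribution'' and to monotone convergence. Its surrounding discussion even posits a finite $\gamma_{c,\max}(W)$, which cannot exist: $\gamma_c^\star\ge|h_{c,1}|^2$ already has unbounded exponential support. Your argument is the right one: almost-sure finiteness, plus the fact that an almost surely positive gap has positive mean. It also shows that part~1 has nothing to do with $W$.

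For part~2, the paper tracks only $\Gamma_c^\star$ and then asserts both limits, so your objection is well founded. In the independent-channel model of Section~\ref{sec:stat}, $\gamma_s^\star\sim\mathrm{Exp}(1)$ for every $W$. Hence $D_s(W)$ is a $W$-independent constant strictly above $\sigma_\theta^2 2^{-C_{\mathrm{AI}}}$. An extra hypothesis such as $\gamma_s^\star(W)\to\infty$ in probability is therefore genuinely needed; it is not a weakness of your proof.

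Two refinements. First, correlation tending to one is more than you need. Suppose $h_{s,\ell}=\rho\, h_{c,\ell}+\sqrt{1-|\rho|^2}\,e_\ell$ with fixed $\rho\neq 0$ and $\mathbf{e}\sim\mathcal{CN}(\mathbf{0},\mathbf{R})$ independent of $\mathbf{h}_c$. Then $e_{\ell^\star}\sim\mathcal{CN}(0,1)$, and $|h_{s,\ell^\star}|\ge|\rho|\sqrt{\gamma_c^\star}-\sqrt{1-|\rho|^2}\,|e_{\ell^\star}|\to\infty$ in probability.

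Second, the outage expansion cannot by itself support $\gamma_c^\star(W)\to\infty$. It is a small-$\Omega$ asymptotic that is not uniform in $W$. Moreover, with $L$ fixed one has $\mathbf{R}\to\mathbf{I}$ as $W\to\infty$, so $\gamma_c^\star$ converges to the maximum of $L$ i.i.d.\ exponentials and stays tight. The hypothesis therefore implicitly requires $L$ to grow with $W$, and you are right to take it as an assumption rather than try to derive it.
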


\begin{IEEEproof}
For fixed $W$, the selected port gain has a finite distribution determined by $L'(W)$ \cite{new2024fluid}; thus $\Gamma_c^\star(W)$ remains strictly below the AI ceiling and the inequalities follow from monotone convergence and continuity of the logarithm. As $W\to\infty$, the numerical rank $L'(W)$ grows without bound, the selected port gain diverges in probability, and the effective SNR converges to $2^{C_{\mathrm{AI}}}-1$, yielding the AI-limited expressions in Theorem~\ref{thm:region}, which completes the proof.
\end{IEEEproof}

This result demonstrates that the FAS length, rather than the number of ports, governs the extent to which spatial diversity can mitigate the effects of the AI representation bottleneck.

\section{VIB Encoder Design}\label{sec:var}
The theoretical analysis in previous sections models the pre-channel AI mapping $X \mapsto Z$ as a Gaussian  channel that satisfies the information constraint $I(X;Z)\le C_{\mathrm{AI}}$. In this section, we describe a practical neural implementation of this mapping based on the VIB, which provides a parametric approximation to the optimal Gaussian encoder and can be trained directly from data. This section establishes how the theoretical bottleneck model can be realized in practice and how the resulting latent representation approximates, and in the large-sample and optimal training limit recovers, the Gaussian structure used in our achievability proof. 
	
\subsection{VIB Objective}
 The bottleneck is positioned before the physical channel, so the latent representation $Z$ is generated from the ideal transmitted symbol $X$ under the information constraint $I(X;Z)\le C_{\mathrm{AI}}$. The physical ISAC channel then operates on $Z$, producing the communication and sensing observations $(Y_c,Y_s)$. Accordingly, the dependency structure follows the Markov chain $X\rightarrow Z \rightarrow (Y_c,Y_s)$.

The VIB formulation approximates the constrained optimization $I(X;Z)\le C_{\mathrm{AI}}$ via the Lagrangian
\begin{equation}\label{eq:VIB_objective}
\mathcal{L}_{\mathrm{VIB}}=\mathbb{E}_{p(x,m)}\left[\mathbb{E}_{p_\phi(z|x)}\left[-\log p_\psi(m|z)\right]\right]+\beta\, I_\phi(X;Z),
\end{equation}
where $\phi$ and $\psi$ denote the encoder and decoder parameters, respectively, and $\beta>0$ controls the strength of the bottleneck. Larger $\beta$ enforces a stricter compression and corresponds to a smaller effective AI capacity $C_{\mathrm{AI}}$.

The MI term is approximated by the Kullback-Leibler (KL) divergence between the encoder and a prior as
\begin{equation}\label{eq:VIB_MI_KL}
I_\phi(X;Z)\approx\mathbb{E}_{p(x)}D_{\mathrm{KL}}\left(p_\phi(z|x) \; \| \; p(z)\right),
\end{equation}
where $p(z)$ is typically chosen as a simple Gaussian prior. This approximation matches the theoretical Gaussian channel used in Section~\ref{sec:cap} and ensures that the learned latent representation $Z$ approximates the optimal Gaussian solution for the information-constrained encoder.

\subsection{Gaussian Reparameterization}
To align the neural encoder with the optimal  channel structure, we model
\begin{align}
p_\phi(z|x) = \mathcal{CN}\left(\mu_\phi(x),\Sigma_\phi(x)\right), 
\end{align}
and using the standard reparameterization technique, we write
\begin{equation}\label{eq:reparam}
z = \mu_\phi(x)+ \Sigma_\phi^{1/2}(x)\,\epsilon,~\mbox{with }\epsilon \sim \mathcal{CN}(0,I),
\end{equation}
which allows low-variance gradient estimation.

Choosing the prior as $p(z)=\mathcal{CN}(0,\sigma_z^2 I)$ yields the KL term as
\begin{multline}\label{eq:KL_gaussian}
D_{\mathrm{KL}}=\log \frac{|\sigma_z^2 I|}{|\Sigma_\phi(x)|}+ \operatorname{tr}\left((\sigma_z^2 I)^{-1} \Sigma_\phi(x)\right)\\
+ \mu_\phi(x)^\dagger (\sigma_z^2 I)^{-1}\mu_\phi(x)- d.
\end{multline}
Note that minimizing \eqref{eq:VIB_objective} with respect to $(\phi,\psi)$ encourages $\Sigma_\phi$ to approach a constant isotropic covariance, so the effective learned bottleneck approximates the form $Z = X + W_z$ with $W_z\!\sim\!\mathcal{CN}(0,N_z I)$, which is exactly the Gaussian structure required by Theorem~\ref{thm:region}. Also, the trained variance $N_z$ satisfies
\begin{align}
I(X;Z) \approx C_{\mathrm{AI}},
\end{align}
such that $N_z$ approaches the theoretical optimum $N_z^\star = P/(2^{C_{\mathrm{AI}}}-1)$, which comes from (\ref{eq:theopt}).

\subsection{Training Algorithm}
Algorithm~\ref{alg:VIB} summarizes the learning process. The neural encoder receives input symbols $x$ and produces the latent $z$. The decoder $p_\psi(m|z)$ performs message reconstruction during training, ensuring that $Z$ retains the information needed for communication decoding. Because $Z$ is fed into the physical ISAC model, the learned latent representations are directly compatible with the AI-limited FAS-aided ISAC system. The term $\text{CE}(m,\psi(z))$ denotes the standard cross-entropy (CE) loss between the true message label $m$ and the decoder output distribution $\psi(z)$, namely $\text{CE}(m,\psi(z)) = -\log p_\psi(m|z)$.

\begin{algorithm}[t]
\caption{VIB Training for the AI-limited FAS-aided ISAC  Encoder}\label{alg:VIB}
		\begin{algorithmic}[1]
			\STATE Initialize encoder parameters $\phi$ and decoder parameters $\psi$.
			\FOR{each training minibatch}
			\STATE Sample symbols $x$ and corresponding messages $m$.
			\STATE Compute encoder statistics $\mu_\phi(x)$ and $\Sigma_\phi(x)$.
			\STATE Sample $z$ via the reparameterization \eqref{eq:reparam}.
			\STATE Evaluate reconstruction loss $\text{CE}(m,\psi(z))$.
			\STATE Compute KL divergence using \eqref{eq:KL_gaussian}.
			\STATE Form total VIB loss 
			$\mathcal{L}_{\mathrm{VIB}} 
			= \text{CE}(m,\psi(z))
			+ \beta D_{\mathrm{KL}}$.
			\STATE Update $(\phi,\psi)$ using stochastic gradient descent.
			\ENDFOR
\end{algorithmic}
\end{algorithm}

By tuning $\beta$ such that $I_\phi(X;Z)\approx C_{\mathrm{AI}}$, the learned encoder approximates to a Gaussian bottleneck whose latent variance approaches the equivalent representation noise $N_z^\star$ used in the theoretical development. Thus, the VIB framework provides a principled and practical realization of an encoder that closely matches the optimal encoder for the AI-limited FAS-enabled ISAC architecture.

\section{Numerical Results}\label{sec:num}
Here, we present numerical results to validate the theoretical analysis and illustrate the impact of the AI representation bottleneck and FAS on ISAC performance. The simulation parameters are summarized in Table \ref{tab:sim}, including the transmit power, noise variance, AI-capacity budget, and FAS configuration. Benchmarks such as single-input single-output (SISO) and MIMO systems are compared with the proposed FAS-enabled ISAC receiver with different physical lengths.

\begin{table}[t]
\centering
\caption{Simulation parameters.}\label{tab:sim}
\begin{tabular}{lcc}
\toprule
Quantity & Symbol & Value \\
\midrule
Transmit power & $P$ & $30${dBm} \\
Noise variances & $N_0$ & $0.1$ \\
Sensing parameter variance & $\sigma_\theta^2$ & $1$\\
Wavelength & $\lambda$ & $1$ (normalized) \\
Number of ports & $L$ & $256$ \\
MIMO & $(N_t,N_r)$ & $(2,2)$  \\
Length of fluid antenna & $W$ & $\{0.5\lambda, 2\lambda, 8\lambda\}$ \\
AI-capacity set for frontier & $C_\mathrm{AI}$ & $\{2,4,6,\infty\}$  \\
\bottomrule
\end{tabular}
\end{table}

\begin{figure}[t]
\centering
\includegraphics[width=.9\columnwidth]{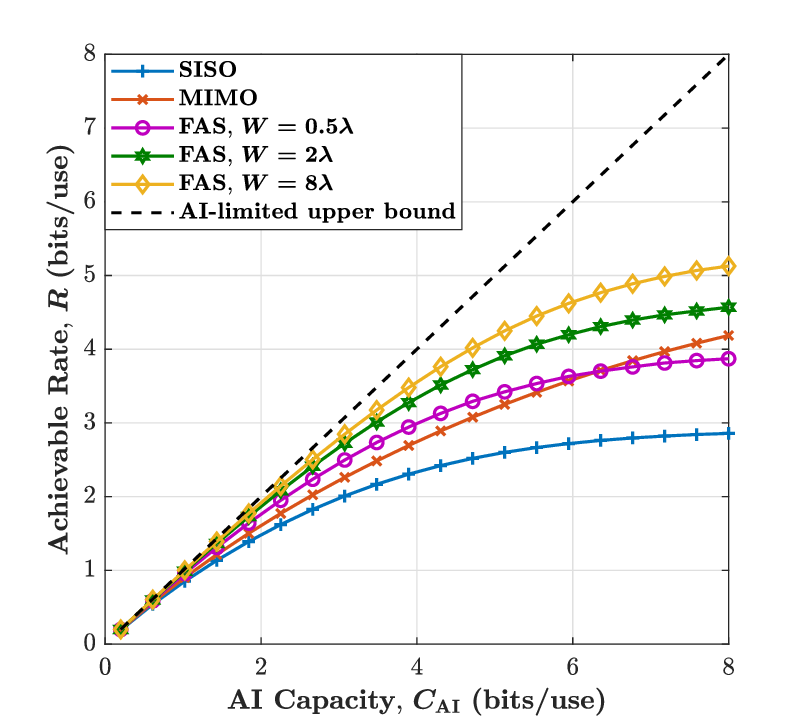}
\caption{Achievable communication rate $R$ versus AI-capacity $C_\mathrm{AI}$.}\label{fig:r_ai}
\end{figure}

\begin{figure}[t]
\centering
\includegraphics[width=.9\columnwidth]{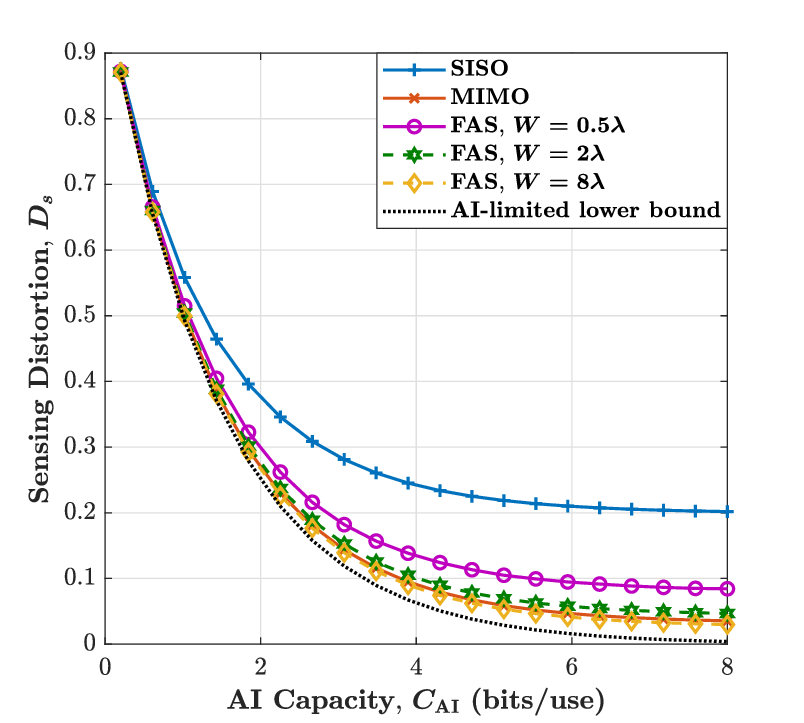}
\caption{Sensing distortion $D_s$ versus AI-capacity $C_\mathrm{AI}$.}\label{fig:d_ai}
\end{figure}

\begin{figure}[t]
\centering
\includegraphics[width=.9\columnwidth]{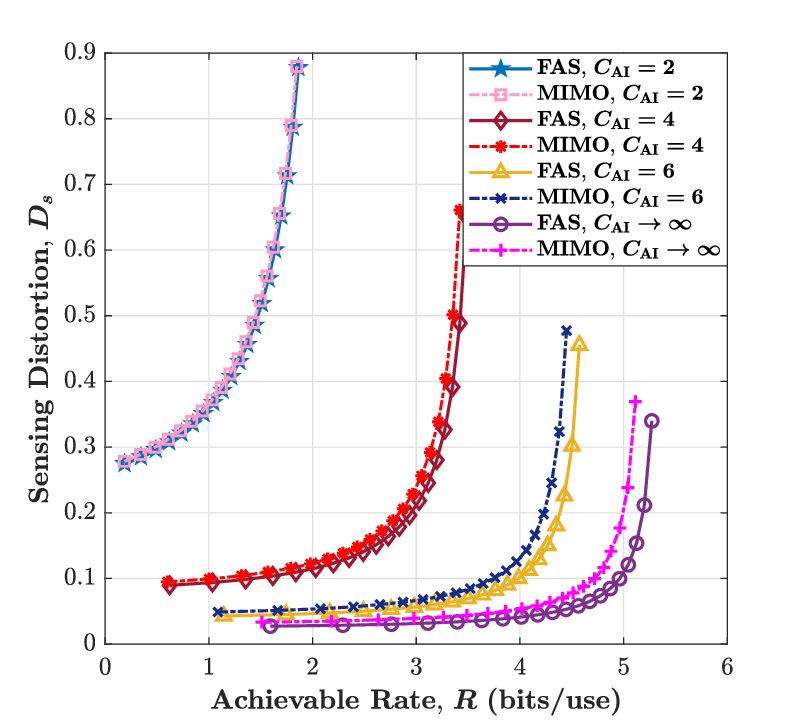}
\caption{Joint rate-sensing trade-off for the proposed model when $W=8\lambda$.}\label{fig:d_r}
\end{figure}

\begin{figure}[t]
\centering
\includegraphics[width=.9\columnwidth]{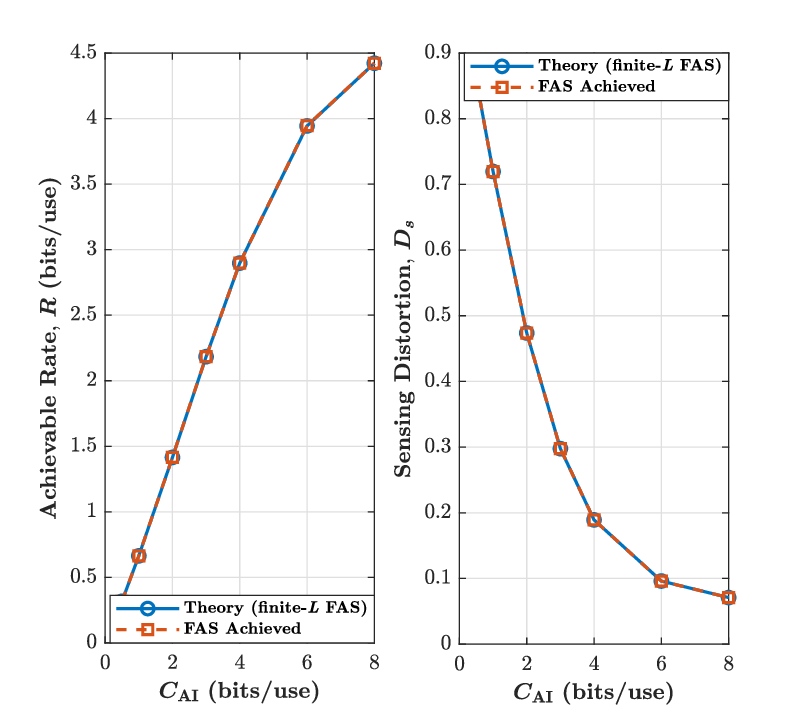}
\caption{Validation of theoretical and achieved performance.}\label{fig:te_sim}
\end{figure}

Fig.~\ref{fig:r_ai} illustrates the achievable communication rate $R$ as a function of the AI-capacity budget $C_{\mathrm{AI}}$ for several representative architectures, including classical SISO and MIMO links as well as the proposed FAS-based ISAC receiver with different antenna length. Across all configurations, the achievable rate increases monotonically with $C_{\mathrm{AI}}$, reflecting the fact that a larger learning-capacity budget reduces the amount of information lost through the AI bottleneck and thereby enables more accurate preservation of channel-relevant features. We can also see that the SISO case exhibits the lowest achievable rate over the full range of $C_{\mathrm{AI}}$, since the effective SNR remains strictly limited by a single-antenna channel gain. The $2\times 2$ MIMO configuration provides a substantial improvement due to spatial multiplexing, and its rate approaches the AI-limited upper bound $R=C_{\mathrm{AI}}$ more closely than the SISO case. 

The FAS curves reveal that increasing $W$ directly enhances the rate under the same learning-capacity constraint. A small FAS of $W=0.5\lambda$ provides only modest gains over conventional MIMO, but as the length expands to $W=8\lambda$, the resulting spatial diversity and angular resolution significantly boost the effective channel gain distribution. Consequently, larger FAS lengths yield markedly higher rates and approach the AI-limited bound more rapidly. This behavior confirms the benefit of dense spatial sampling when the AI encoder compresses the ideal transmit symbol into a low-capacity latent representation, namely, richer spatial observability mitigates the information loss caused by the finite $C_{\mathrm{AI}}$. Furthermore, as $C_{\mathrm{AI}}$ increases, the gain of larger FAS becomes increasingly pronounced, highlighting that the combination of extended spatial length and a sufficiently expressive AI module produces the highest throughput among the examined architectures.

Fig.~\ref{fig:d_ai} presents the sensing distortion $D_s$ in terms of the AI-capacity $C_{\mathrm{AI}}$ for several receiver architectures, including SISO, $2\times2$ MIMO, and FAS with different lengths. Across all curves, we see that the sensing distortion decreases monotonically with the available AI-capacity. This behavior reflects the fact that a larger information budget allows the AI module to retain a more faithful latent representation of the ideal transmit symbol, which in turn improves the quality of the received sensing observations.

At small AI capacities, i.e., below approximately $2$ bits/use, all architectures operate in a highly compressed regime in which the learning bottleneck dominates system performance. In this region, the distortion values are relatively high because the latent representation cannot preserve the fine-grained structure of the transmitted waveform that interacts with the propagation environment. Even under these tight constraints, the FAS architectures already provide noticeable improvement over SISO and conventional MIMO reception due to their ability to capture richer spatial structure. As $C_{\mathrm{AI}}$ increases, the differences between architectures become more pronounced. The $2\times2$ MIMO system achieves significantly lower distortion than the SISO case, owing to its larger effective sensing gain. The FAS configurations further enhances performance, and their relative advantage grows with its physical length. A small length ($W=0.5\lambda$) already delivers meaningful gains, while medium and large lengths ($W=2\lambda$) and ($W=8\lambda$) achieve substantially lower distortion by exploiting their increased spatial resolution and diversity. Additionally, we observe that for moderate and large AI capacities, the benefit of larger FAS lengths becomes increasingly pronounced even though port selection is communication-based, highlighting that richer spatial sampling mitigates the information loss caused by the finite. The theoretical AI-limited lower bound, shown as a dotted curve, represents the minimum distortion achievable when the learning representation incurs no additional information loss. Larger FAS lengths approach this bound more closely, showing their ability to fully exploit the available AI-capacity and preserve nearly all relevant sensing information.

Fig.~\ref{fig:d_r} depicts the joint relationship between the communication rate $R$ and the sensing distortion $D_s$ for the proposed AI-constrained ISAC architecture when the FAS is configured with a length of $W = 8\lambda$. It is observed that for the smallest AI-capacity $C_{\mathrm{AI}}=2$ bits/use, both architectures operate in a strongly learning-limited regime. The equivalent AI noise remains large, so only a coarse latent representation of the ideal transmit symbol can be preserved. As a result, the achievable rates are modest, roughly below $1.5$ bits/use, and the sensing distortion remains relatively high over the entire frontier, with $D_s$ approaching values close to unity when the operating point is pushed toward maximum rate. In this regime, the difference between FAS and conventional MIMO is noticeable but not dramatic, since the AI bottleneck dominates any advantage provided by additional spatial structure.

When the AI-capacity increases to $C_{\mathrm{AI}}=4$ bits/use, the effective AI noise is significantly reduced and both rate and sensing performance improve. The frontiers shift downwards and to the right, indicating simultaneously lower distortion and higher rate. Here, the benefit of the FAS becomes clearer such that for the same target distortion, the FAS curve supports a larger rate than the MIMO benchmark, and for a fixed rate the FAS achieves a smaller distortion. This gain comes from the richer spatial sampling offered by the dense FAS ports, more effectively shapes the distribution of the selected port gains, allowing the AI module can encode the most informative spatial modes within its finite capacity budget.

The trend continues as the AI-capacity is further increased to $C_{\mathrm{AI}}=6$ bits/use. The frontiers expand substantially, with FAS rates exceeding $4$ bits/use while maintaining distortions well below $0.1$ over a wide range of operating points. In contrast, the MIMO curve is consistently contained within the FAS region, confirming that FAS can exploit the additional AI bits more efficiently. Intuitively, once the AI bottleneck becomes less severe, the system performance is increasingly dictated by the physical channel gains. The large-length FAS offers both higher effective communication SNR and stronger sensing gain than the fixed-array MIMO baseline, and this advantage is directly reflected in the rate-distortion frontier.

The curves labeled $C_{\mathrm{AI}}\to\infty$ correspond to the idealized case in which the learning module has effectively unbounded capacity and introduces no additional representation noise. These frontiers characterize the fundamental joint performance dictated solely by the physical channel and noise statistics. Even in this ideal case, FAS continues to dominate the $2\times2$ MIMO system, achieving the lowest sensing distortion for a given rate and the highest rate for a given distortion. Comparing the finite $C_{\mathrm{AI}}$ curves with their $C_{\mathrm{AI}}\to\infty$ counterparts reveals a clear saturation behavior such that increasing the AI-capacity beyond approximately $6$ bits/use yields only marginal improvements, since the system is then limited mainly by thermal noise rather than by the learning bottleneck.

Finally, Fig.~\ref{fig:te_sim} shows the validation of the proposed finite-$L$ FAS performance characterization under the Jakes' spatial correlation model. This figure compares the communication rate and sensing distortion predicted by the finite-$L$ theory, obtained by evaluating \eqref{eq:R_region}--\eqref{eq:D_region} using the exact distribution of the selected port gain $\gamma_c^\star$, with Monte Carlo simulations of a practical FAS with $L = 256$ ports. The theoretical and simulated curves coincide across the entire range of AI capacities $C_{\mathrm{AI}}$, confirming that the analytical model accurately captures the behavior of spatially correlated FAS arrays. This close agreement verifies that the representation-noise interpretation of the AI bottleneck remains valid under realistic correlation structures and that the proposed finite-$L$ analysis provides an accurate prediction of the achievable ISAC performance.

\section{Conclusion}\label{sec:con}
This work presented the first information-theoretic characterization of FAS-assisted ISAC under a pre-channel AI representation bottleneck. By modeling the AI encoder as an information-constrained mapping and the FAS as a spatially correlated port selection architecture, we derived the capacity-distortion region in closed form and established matching converse and achievability bounds. The analysis revealed that the AI bottleneck appears as an equivalent representation noise that uniformly degrades both communication and sensing SNRs at the selected port. For FAS, we further quantified the behavior of the selected port gain and showed that its growth with the physical length of the array is sufficient to offset the information loss induced by the AI constraint. As a result, both the achievable communication rate and sensing MSE approach their AI-limited fundamental limits as the fluid antenna length increases. Our numerical results suggest that FAS can act as an effective spatial amplifier for AI-limited transceivers, enabling ISAC systems to recover much of the performance otherwise lost due to finite representation capability. 

\bibliographystyle{IEEEtran}

\end{document}